\newtheorem{thm}{Theorem}
\theoremstyle{definition}
\newtheorem*{remark}{Remark}
\newcommand{\erre}{\mathbb{R}}
\newcommand{\tr}{\text{tr}}
\newcommand{\Mnn}{\mathcal{M}_{n,n} (\mathbb{C})}
\newcommand{\Pm}{\mathcal{P}^+}
\newcommand{\Vol}{\operatorname{Vol}}
\begin{document}

\preprint{AIP/123-QED}

\title[Geometric approach to Quantum Dynamics]{Geometric approach to non-relativistic Quantum Dynamics of mixed states}

\author{Vicent Gimeno}
\email{vigigar@postal.uv.es}
 \affiliation{Departament de Matem\`{a}tiques- Institute of New Imaging Technologies, Universitat Jaume I, Castell\'o,
Spain.}

\author{Jose M. Sotoca}%
 \email{sotoca@uji.es}
\affiliation{ Departamento de Lenguajes y Sistemas Inform\'aticos- Institute of New Imaging Technologies, Universitat Jaume I, Castell\'o,
Spain.
}%

\date{\today}

\begin{abstract}
 In this paper we propose a geometrization of the non-relativistic quantum mechanics for mixed states. Our geometric approach makes use of the Uhlmann's principal fibre bundle to describe the space of mixed states and as a novelty tool, to define a dynamic-dependent  metric tensor on the principal manifold, such that the projection of the geodesic flow to the base manifold gives the temporal evolution predicted by the von Neumann equation. Using that approach we can describe every conserved quantum observable as a Killing vector field, and provide a geometric proof for the Poincare quantum recurrence in a physical system with finite energy levels.   %
\end{abstract}

\pacs{03.65.-w}
\keywords{Mixed states; Fibre bundle;  Dynamic metric;  Poincare recurrence}
\maketitle

\section{Introduction}
The geometrization of physical theories is a successful and challenging area in theoretical physics. The most well known  examples are Hamiltonian mechanics based on symplectic geometry, General Relativity based on semi-Riemannian geometry and classical Yang-Mills theory which uses fibre bundles \cite{Jost-phy}. 

Geometric ideas have also found a clear utility in non-relativistic quantum mechanics problems because quantum theory can be formulated in the language of Hamiltonian phase-space dynamics \cite{Kibble}. Hence, the quantum theory has an intrinsic mathematical structure equivalent to Hamiltonian phase-space dynamics. However, the underlying phase-space is not the same space of classical mechanics, but  the space of quantum mechanics itself, i.e., the \emph{space of pure states} or the \emph{space of mixed states}. 

Unlike General Relativity or Gauge Theory where the metric tensor or the connection are related with the physical interaction, the most usual geometric formulation of the geometry of non-relativistic quantum mechanics is not dynamic, in the sense that is insensitive to changes in the Hamiltonian of the system. Under these assumptions, that approach only makes use of the differential structure of the Hilbert space for quantum states and the Fubini-Study metric. See for example the geometric interpretation of Berry's phase \cite{Bengtsson}.

From a more dynamical point of view, A. Kryukov\cite{kryukov2} has stated that the Schr\"odinger equation \footnote{Observe that in equation \ref{Schroed} and throughout this paper we use the natural units system ($h=1$)} for a pure state $\ket{\alpha_{t}}$  (Plank's constant is set equal to $1$) 
\begin{equation}
\label{Schroed}
\frac{d }{dt}\ket{\alpha_{t}}=-{i}{ H}\ket{\alpha_{t}}\quad,
\end{equation}
can be considered as a geodesic flow in a certain Riemannian manifold with an accurate metric which depends on the Hamiltonian of the system.  

The goal  of this paper is to generalize  the work of Kryukov for mixed states. To this end, we provide an underlying differential manifold to describe mixed states and a dynamic-dependent Riemannian metric tensor to analyze their temporal evolution. 

The mixed states are characterized by  density matrices and  the equation which plays the role of the Schr\"odinger one is the von Neumann equation \cite{Sakurai}

\begin{equation}
\label{vonNeumann}
\frac{d \rho_{t}}{dt}=-{i}\left[ H,\rho_{t}\right]\quad.
\end{equation}

To obtain the underlying differential manifold following the Uhlmann's geometrization for non-relativistic quantum mechanics\cite{Uhlmann86,Uhlmann87,Uhlmann89,Uhlmann91}, we make use of a principal fibre bundle such that its base manifold is the space of mixed states. Finally, to provide the Riemannian metric we choose an appropriate metric in the principal bundle, in such a way that the projection of the geodesic flow in the principal manifold to the base manifold  is just the temporal evolution given by  the von Neumann equation.

Among the geometric properties that are observed due to the movement of this geodesic flow, in this paper we analyze the phase volume conservation according to the Liouville Theorem. That allow us to show  a geometric proof of  the Poincare recurrence theorem relating it with the recurrence principle for physical systems with discrete energy levels. Let us emphasize that our geometric proof for the quantum Poincare recurrence is closer to the classical mechanics proof\cite{Arnold} (that also uses the conservation of the volume in the phase-space evolution) than the previous given in the quantum setting\cite{Bocchieri,Schulman,Percival}.   

\section{Density matrices space as a base of a principal fibre bundle}

The most general state, the so-called \emph{mixed state}, is represented by a \emph{density operator}  in the Hilbert space $\mathcal{H}$. In this paper we always assume that dim$(\mathcal{H})=n<\infty$, being $\mathcal{H}$ a vector space on the complex field ($\mathcal{H}=\mathbb{C}^n$). The density operator $\rho$ is in fact a \emph{density matrix}. Recall that a density matrix is a complex matrix $\rho$  that  satisfies the following properties:
\begin{enumerate}
\item $\rho$ is a hermitian matrix, i.e, the matrix coincides with its conjugate transpose matrix: $\rho=\rho^\dag$.
\item $\rho$ is positive $\rho \geq 0$. It means that any eigenvalue of $A$ is non-negative.
\item $\rho$ is normalized by the trace $\tr (\rho)=1$.
\end{enumerate}

Let us denote by $\mathcal{P}$ the space of mixed quantum states.
Note that the space of pure states $\mathcal{P}(\mathcal{H})$ is just
$$
\mathcal{P}(\mathcal{H})=\left\{ \rho  \in \mathcal{P} \, \vert\, \rho^2=\rho\right\}\quad. 
$$

Recall that the space of quantum pure states has an elegant interpretation as a $U(1)$-fibre bundle $\mathcal{S}(\mathcal{H})\to \mathcal{P}(\mathcal{H})$. Following Uhlmann \cite{Uhlmann86,Uhlmann87,Uhlmann89,Uhlmann91} and Bengtsson  and Chru\`sci\`nski books \cite{Bengtsson,Chruscinski}, we can use a similar argument to the case of quantum pure states. The key idea of Uhlmann's approach is to lift the system density operator $\rho$, acting on the Hilbert space $\mathcal{H}$, to an extended Hilbert space 
$$
\mathcal{H}^{\text{ext}}:=\mathcal{H} \otimes \mathcal{H}\quad.
$$ 

In quantum information theory \cite{Nielsen}, the procedure of extension, $\mathcal{H} \to \mathcal{H}^{\text{ext}}$ is known as attaching an ancilla living in $\mathcal{H}$. Obviously, the space of squared matrices $\Mnn$ ($n$ rows, $n$ columns) over $\mathbb{C}$ (that is a $2n^2$ real dimensional manifold) can be identified with $\mathcal{H}^{\text{ext}}$
$$
\Mnn \cong \mathcal{H}^{\text{ext}}\quad.
$$ 

Since $\tr(WW^\dag)$ is a smooth function in the space of squared matrices, by the Regular Level Set Theorem \cite{Lee-man}, the set
\begin{equation}
\mathcal{S}_0:=\left\{ W \in \Mnn \, : \, \tr (WW^\dag)=1 \right\}\quad,
\end{equation}
is a smooth manifold of $\Mnn$. Actually, it is not hard to see that $\mathcal{S}_0$ is diffeomorphic to the sphere $\mathbb{S}^{2n^2-1}$. If $\rho$ is a mixed state in $\mathcal{P}$, we shall denote an element $W\in \mathcal{S}_0$ a \emph{purification} of $\rho$ if
\begin{equation}
\label{2_2}
\begin{aligned}
&\rho=WW^\dag\quad,
\end{aligned}
\end{equation}
therefore, we get the space of density matrices $\mathcal{P}$ by the projection $\pi: \mathcal{S}_0\to \mathcal{P} $, where the projection is given by
\begin{equation}
\pi(W)=WW^\dag\quad.
\end{equation}

Observe that, if $u$ is an unitary matrix (i.e, $uu^\dag=u^\dag u=\mathbb{I}_n$) then
\begin{equation}
\pi(Wu)=\pi(W)\quad.
\end{equation}

Moreover, to fix notation recall that the Lie group $U(n)$ is a \emph{Lie transformation group}\cite{Koba1} acting on $\mathcal{S}_0$ on the right. In general, a principal fibre bundle\cite{Koba1} will be denoted by $P(M,G,\pi)$, being  $P$ the \emph{total space}, $M$ the \emph{base space}, $G$ the \emph{structure group} and $\pi$ the \emph{projection}. For each $x \in M$, $\pi ^{-1}(x)$ is a closed submanifold of $P$, called the \emph{fibre} over $x$. If $p$ is a point of $\pi^{-1}(x)$, then $\pi^{-1}(x)$ is the set of points $\left\{pa, a\in G\right\}$, and it is called fibre through $p$. 

At this point, an important question to answer,  is if $\mathcal{S}_0(\mathcal{P},U(n),\pi)$ is a principal fibre bundle over the base manifold $\mathcal{P}$ of density matrices. Unfortunately the answer is no, because $U(n)$ does not act freely on $\mathcal{S}_0$. In general, $Wu=W$ for $W\in \mathcal{S}_0$ and $u\in U(n)$ do not imply that $u=\mathbb{I}_n$, but observe that if $\det (W)\neq 0$ ( i.e, $W$ is an invertible matrix ) $U(n)$ would act freely on our space. This should be the way to describe the space of density matrices. Instead of starting with $\Mnn$, we start with the subset of invertible matrices. That subset has the differentiable structure of the Lie group $\text{GL}(n,\mathbb{C})$. Then, we build a submanifold $\mathcal{S}$ of $\text{GL}(n,\mathbb{C})$ given by
\begin{equation}
\mathcal{S}:=\left\{W\in \text{GL}(n,\mathbb{C})\,; \,\tr(WW^\dag)=1\right\}\quad.
\end{equation}

Finally, we obtain the base manifold $\mathcal{P}^+$ using the projection $\pi: \mathcal{S}\to \mathcal{P}^+$ given by
\begin{equation}
\pi(W)=WW^\dag\quad,
\end{equation}
and therefore, $\mathcal{S}(\mathcal{P}^+,U(n),\pi)$ becomes a principal fibre bundle. Observe that 
$$
\mathcal{P}^+=\left\{ \rho \in \mathcal{P} \, \vert \, \rho>0\right\}\quad,
$$
contains only strictly positive (or faithful) density operators. But $\mathcal{P}$ can be recovered from $\mathcal{P}^+$ by continuity arguments\cite{Bengtsson}.
  
In short, we describe the geometry of density matrices as a base manifold of a principal fibre bundle consisting of  a  submanifold $\mathcal{S}$ of the Lie group $GL(n,\mathbb{C})$ diffeomorphic to the sphere $\mathbb{S}^{2n^2-1}$ as a total space and the Lie group $U(n)$ as structure group.

Since $\mathcal{S}(\mathcal{P}^+,U(n),\pi)$ admits a global section \footnote{The map $\tau$ is well defined because a positive operator admits a unique positive square root. It is a section because $\pi(\tau(\rho))=(\sqrt(\rho))^2=\rho$.} $\tau: \mathcal{P}^+\to \mathcal{S}$
\begin{equation}\label{global-section}
\tau(\rho):=\sqrt{\rho} \quad,
\end{equation}
therefore $\mathcal{S}(\mathcal{P}^+,U(n),\pi)$ is a trivial bundle from a topological point of view, that means that\cite{Bengtsson,Koba1}
\begin{equation}
\mathcal{S}=\mathcal{P}^+\times U(n)\quad.
\end{equation}

\section{Hamiltonian vector field, Dynamic Riemannian metric, SHg-quantum fibre bundle and Main theorem}
In this section, we define a Riemannian metric for dynamics systems and we study how this metric acts within the tangent vector space of $\mathcal{S}$. We also discuss its relationship with other metrics such as the Bures metric or the metric proposed by Kryukov \cite{kryukov2}.

\subsection{Hamiltonian vector field, dynamic metric and its relation with other metrics}
In order to provide explicit expressions for tangent vectors to $\mathcal{S}$ and the metric tensor,  we identify the tangent space $T_W\Mnn$ with $\Mnn$ itself. Since our total space $\mathcal{S}$ is a submanifold of the manifold $\Mnn$, where each point  $W\in \mathcal{S}$ is a matrix, and the tangent space $T_W\mathcal{S}$ to $\mathcal{S}$ in the point $W$ is a subspace of the tangent space $T_W\Mnn$. We can use a matrix to describe a point $W\in \mathcal{S}$ and a matrix to describe a tangent vector $X\in T_W\mathcal{S}$ too.

First of all, note that the Hamiltonian operator $H$ induces a vector field $h: \mathcal{S}\to T\mathcal{S}$ on $\mathcal{S}$ given by
\begin{equation}
h_W:=-iHW\quad,
\end{equation} 
where $h_W$ denotes the vector field in the point $W\in \mathcal{S}$, i.e, $h_W=h(W)$. That vector field $h$ will be denoted as the \emph{Hamiltonian vector field}. 

For any point $W\in \mathcal{S}$, and any two tangent vectors $X,Y \in T_W\mathcal{S}$, we define the \emph{dynamic Riemannian metric} $g_H(X,Y)$ as 
\begin{equation}\label{3_2}
g_H(X,Y):=\frac{1}{2}\tr(X^\dag H^{-2}Y+Y^\dag H^{-2} X)\quad.
\end{equation} 

It will be denoted by $\nabla^H$ the Levi-Civita connexion (the sole metric torsion free connexion) given by $g_H$. In the definition (\ref{3_2}) we use $H^{-2}$ assuming that $H$ is an invertible matrix, but that in fact makes no restriction on the Hamiltonian of the system because we can set $H\to H+\mathbb{I}_n$ without changing the underlying physics. It is not hard to see that $g_H$ defines a positive definite inner product in each tangent space $T_W\mathcal{S}$, being therefore $g_H$ a Riemannian metric. 

With that metric tensor $g_H$ the (sub)manifold $(\mathcal{S},g_H)$ becomes a Riemannian manifold. In order to fix the notation we denote $\left\{\mathcal{S}(\mathcal{P}^+,U(n),\pi),h,g_H\right\}$ the \emph{SHg-quantum fibre bundle of dimension} $n$.

The rest of this section will examine the inherited metric in the base manifold (theorem 1) from the dynamic metric in the principal manifold and its relation with other metrics. 

The tangent space $T_W\mathcal{S}$ at the point $W\in \mathcal{S}$ can be decomposed in its horizontal $H_W$ and vertical $V_W$ subspaces:
$$
T_W\mathcal{S}=H_W\oplus V_W\quad.
$$

Observe moreover that the vertical subspaces $V_W$ are the vectors tangent to the fibres. Therefore, any vertical vector $X_V\in T_W\mathcal{S}$ can be written as
$$
X_V=W\, A\quad,
$$
where $A\in \mathfrak{u}(n)$ (i.e, $A$ is an antihermitian matrix). Note that our metric $g_H$ defines a natural connexion as follows: A tangent vector $X$ at $W$ 
is horizontal if it is orthogonal to the fibre passing through $W$, i.e., if
$$
g_H(X,Y)=0\quad,
$$
for all vertical vector $Y$ at $W$ ($Y\in V_W$). Hence $X \in T_W\mathcal{S}$ is horizontal if 
\begin{equation}
X^\dag H^{-2}W-W^\dag H^{-2}X=0\quad.
\end{equation} 

Therefore, we can define a metric $g^{\Pm}_H$ in the base manifold for any point $\rho \in \mathcal{P}^+$, given by
$$
g^{\Pm}_H(Y,Z):=g_H(Y_{\text{Hor}},Z_{\text{Hor}})\quad,
$$ 
where $Y, Z \in T_\rho \Pm$ and $Y_\text{Hor}$ (respectively $Z_\text{Hor}$) are the horizontal lift of $Y$ (respectively $Z$).

\begin{thm}The metric $g^{\Pm}_H$ in the base manifold at any point $\rho \in \mathcal{P}^+$ can be obtained as
$$
g^{\Pm}_H(Y,Z):=\frac{1}{2}\tr(H^{-1}G_YH^{-1}Z)\quad,
$$
where $G_Y$ is the unique hermitian matrix satisfying
$$
H^{-1}YH^{-1}=G_YH^{-1}\rho H^{-1} + H^{-1}\rho H^{-1}G_Y\quad.
$$
\end{thm}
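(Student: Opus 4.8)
The plan is to remove the Hamiltonian from both the metric and the horizontality condition by a single change of frame, reducing the whole computation to the standard Bures horizontal-lift calculation. First I would set $\tilde{X}:=H^{-1}X$ for every tangent vector and $\tilde{W}:=H^{-1}W$ for the base point. Because $X^\dag H^{-2}X'=\tilde{X}^\dag\tilde{X}'$, the dynamic metric collapses to the plain real Hilbert--Schmidt form $g_H(X,X')=\operatorname{Re}\tr(\tilde{X}^\dag\tilde{X}')$; the horizontality condition $X^\dag H^{-2}W=W^\dag H^{-2}X$ becomes the statement that $\tilde{W}^\dag\tilde{X}$ is hermitian; the derivative of the projection $Y=XW^\dag+WX^\dag$ becomes $\tilde{Y}:=H^{-1}YH^{-1}=\tilde{X}\tilde{W}^\dag+\tilde{W}\tilde{X}^\dag$; and the base point $\rho=WW^\dag$ becomes $\tilde\rho:=\tilde{W}\tilde{W}^\dag=H^{-1}\rho H^{-1}$, which is exactly the positive-definite matrix multiplying $G_Y$ in the equation of the statement.

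Next I would exhibit the horizontal lift explicitly. I claim that the horizontal lift of $\tilde{Y}$ is $\tilde{X}=G_Y\tilde{W}$, where $G_Y$ is the unique hermitian solution of $\tilde{Y}=G_Y\tilde\rho+\tilde\rho G_Y$, which, written back in terms of $H,Y,\rho$, is precisely the defining equation in the theorem. Existence and uniqueness hold because $\tilde\rho>0$ makes the Lyapunov map $G\mapsto G\tilde\rho+\tilde\rho G$ invertible, and the solution is hermitian since the source $\tilde{Y}$ is. To confirm that $G_Y\tilde{W}$ is the lift I would verify the two defining properties: it projects correctly, $\tilde{X}\tilde{W}^\dag+\tilde{W}\tilde{X}^\dag=G_Y\tilde\rho+\tilde\rho G_Y=\tilde{Y}$, and it is horizontal, since $\tilde{W}^\dag\tilde{X}=\tilde{W}^\dag G_Y\tilde{W}$ is manifestly hermitian. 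One checks in passing that $G_Y\tilde{W}$ is genuinely tangent to the trace-one sphere, which reduces to the condition $\tr Y=0$.

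Finally I would compute the metric. Writing the lifts of $Y,Z$ as $G_Y\tilde{W}$ and $G_Z\tilde{W}$, cyclicity of the trace gives $g^{\Pm}_H(Y,Z)=\operatorname{Re}\tr(\tilde{X}^\dag\tilde{X}')=\operatorname{Re}\tr(G_YG_Z\tilde\rho)$. To match this with $\tfrac12\tr(G_Y\tilde{Z})$ I would expand $\tilde{Z}=G_Z\tilde\rho+\tilde\rho G_Z$, obtaining $\tfrac12\tr(G_Y\tilde{Z})=\tfrac12[\tr(G_YG_Z\tilde\rho)+\tr(G_ZG_Y\tilde\rho)]$; since $\overline{\tr(G_YG_Z\tilde\rho)}=\tr(G_ZG_Y\tilde\rho)$ (all three factors hermitian), this is exactly $\operatorname{Re}\tr(G_YG_Z\tilde\rho)$. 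Undoing the frame change, $\tfrac12\tr(G_Y\tilde{Z})=\tfrac12\tr(H^{-1}G_YH^{-1}Z)$, the asserted formula.

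The step I expect to be the main obstacle is the reality bookkeeping in the last paragraph: the raw pairing $\tr(G_YG_Z\tilde\rho)$ is not itself real because $G_Y$ and $G_Z$ need not commute, so one must symmetrize carefully and invoke $\overline{\tr(G_YG_Z\tilde\rho)}=\tr(G_ZG_Y\tilde\rho)$ to see that its real part equals the symmetric trace $\tfrac12\tr(G_Y\tilde{Z})$. The only other point demanding care is the well-posedness of the equation defining $G_Y$, which rests on the strict positivity $\rho>0$ available on $\Pm$ together with the invertibility of $H$.
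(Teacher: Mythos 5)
Your proof is correct and follows essentially the same route as the paper's: your change of frame $\tilde W=H^{-1}W$ is exactly the paper's substitution $A=H^{-1}W$, your ansatz $\tilde X=G_Y\tilde W$ for the lift is the paper's $\dot A=GA$ with $G$ hermitian, and the concluding trace manipulation (expanding $Z$ via the Sylvester equation and symmetrizing) is the same computation. If anything your write-up is slightly more complete, since you verify both defining properties of the horizontal lift (correct projection and horizontality) and the tangency condition, whereas the paper only checks that $\dot A=GA$ is \emph{sufficient} for horizontality.
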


Note that matrix $G_Y$ exists and is unique by the existence and uniqueness of the solution of the Sylvester equation\cite{Sylvester,Bartels}. Observe moreover that  when $H$ is the identity matrix, then
$$
g^{\Pm}_H(Y,Z):=\frac{1}{2}\tr(G_YZ)\quad, 
$$
where $G_Y$ is the (unique) solution of  
$$
Y=G_Y\rho + \rho G_Y\quad,
$$
and that is the Bures metric \cite{Dittmann}.

\begin{proof}
Let $W:\erre \to \mathcal{S}$ be a curve, such that $\dot W$ is an horizontal vector, then
$$
(\dot W)^\dag H^{-2} W=W^\dag H^{-2}\dot W\quad.
$$

Let us define $A=H^{-1}W$, thus 
$$
\dot A^\dag A=A^\dag \dot A\quad.
$$

It is easy to see that the latter condition is fulfilled if
\begin{equation}
\dot A= G A\quad,
\end{equation}
where $G$ is an Hermitian matrix. Therefore
\[
\dot W=HGH^{-1}W\quad.
\]

Hence applying equation \ref{2_2}
\begin{equation}
\label{3_5}
\pi_*(\dot W)=\dot WW^\dag+W\dot W^\dag=HGH^{-1}\rho+\rho H^{-1}GH\quad.
\end{equation}

Suppose that 
$$
\begin{aligned}
&\pi_*(\dot W)=Y\quad \pi_*(\dot V)=Z\\
&W(0)W(0)^\dag=V(0)V(0)^\dag=\rho\quad,
\end{aligned}
$$
then
\begin{equation}
\label{3_6}
\begin{aligned}
g_H^{\Pm}(Y,Z)=&g_H(\dot W,\dot V)=\frac{1}{2}\tr(\dot W^\dag H^{-2} \dot V+\dot V^\dag H^{-2} \dot W)\\
=&\frac{1}{2}\tr(H^{-1}G_YG_ZH^{-1}\rho + H^{-1}G_ZG_YH^{-1}\rho)\quad,
\end{aligned}
\end{equation}
where 
\[
\dot W=HG_YH^{-1}W\quad \dot V=HG_ZH^{-1}V\quad.
\]

Applying equation (\ref{3_5}) in $\pi_*(\dot V)$
$$
Z=HG_ZH^{-1}\rho+\rho H^{-1}G_ZH\quad.
$$

Using the above expression (\ref{3_6}) the theorem follows.
\end{proof}

In the case of pure states, our Hilbert space is $\mathbb{C}^n$ and the tangent space will be  $\mathbb{C}^n$ too. Following Kryukov work\cite{kryukov2}, we can define a metric $g_K(X,Y)$ for any two tangent vectors $X = (x,x^*)$, $Y= (y,y^*)$, by
$$
g_K(X,Y):=\text{Re}\left(\langle H^{-1}X,H^{-1}Y\rangle\right)\quad,
$$
where $\langle X,Y\rangle=\sum_{i=1}^nx_i y_i^*$, therefore
$$
g_K(X,Y):=\frac{1}{2}\left(\langle H^{-1} X,H^{-1}Y\rangle+\langle H^{-1}Y,H^{-1}X\rangle\right)=\frac{1}{2}\tr(X^\dag H^{-2}Y+Y^\dag H^{-2} X)\quad.
$$

When $H$ is the identity, we recover the Fubini-Study metric.

\subsection{Geometric structure of the SHg-quantum fibre bundle}
As we have previously seen in the SHg-quantum fibre bundle $\left\{\mathcal{S}(\mathcal{P}^+,U(n),\pi),h,g_H\right\}$ of dimension $n$, $\mathcal{S}(\mathcal{P}^+,U(n),\pi)$ is a principal (and trivial) fibre bundle, $\mathcal{S}$ is diffeomophic to the sphere of dimension $2n^2-1$, $h$ is a vector field on $\mathcal{S}$, and $(\mathcal{S},g_H)$ is a Riemannian manifold. But the SHg-quantum fibre bundle has more geometric properties :  

\begin{thm}[Main theorem]
Let $\left\{\mathcal{S}(\mathcal{P}^+,U(n),\pi),h,g_H\right\}$ be a SHg-quantum fibre bundle of dimension $n$.  Then:

\begin{enumerate}
\item $h$ is a Killing vector field of $(\mathcal{S},g_H)$.
\item The integral curves $\gamma: I\subset \erre\to \mathcal{S}$  of $h$ are geodesics of $(\mathcal{S},g_H)$.
 \item The projection on the base manifold $\Pm$ of the geodesic $\gamma$ satisfies the von Neumann equation
\begin{equation}
\frac{d }{dt}\pi\circ \gamma=-{i}\left[ H,\pi\circ \gamma\right]\quad.
\end{equation}
\end{enumerate} 
\end{thm}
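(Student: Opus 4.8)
The plan is to exploit the fact that the Hamiltonian vector field $h_W=-iHW$ is linear in $W$, so its integral curves can be written down in closed form. Solving $\dot\gamma=-iH\gamma$ gives $\gamma(t)=U_tW_0$ with $U_t:=e^{-iHt}$, so the associated flow $\phi_t(W)=e^{-iHt}W$ is simply left multiplication by the unitary $U_t$. Since $U_t$ is unitary, $\tr(\phi_t(W)\phi_t(W)^\dag)=\tr(WW^\dag)=1$, so $\phi_t$ does preserve $\mathcal{S}$ and the flow is well defined. I would record at the outset the single algebraic fact that powers the whole argument: $U_t$ and $U_t^\dag=e^{iHt}$ commute with $H^{-2}$, because all three are functions of the Hermitian matrix $H$, whence $U_t^\dag H^{-2}U_t=H^{-2}$.

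With the flow in hand, statements (3) and (1) are immediate. For (3), I would compute $\pi(\gamma(t))=\gamma(t)\gamma(t)^\dag=U_t\rho_0U_t^\dag=e^{-iHt}\rho_0e^{iHt}$ and differentiate, obtaining $\frac{d}{dt}\pi\circ\gamma=-iH\,(\pi\circ\gamma)+(\pi\circ\gamma)\,iH=-i[H,\pi\circ\gamma]$, which is the von Neumann equation. For (1), since $\phi_t$ is linear its differential is $(\phi_t)_*X=U_tX$; substituting into the definition of $g_H$ and using $U_t^\dag H^{-2}U_t=H^{-2}$ gives $g_H((\phi_t)_*X,(\phi_t)_*Y)=g_H(X,Y)$. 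Thus each $\phi_t$ is an isometry of $(\mathcal{S},g_H)$, and $h$, as the infinitesimal generator of a one-parameter group of isometries, is a Killing field.

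For (2) I would give the quickest route, which reuses (1). For any Killing field one has the identity $\nabla^H_h h=-\tfrac12\,\operatorname{grad}g_H(h,h)$, so it suffices to show that the function $W\mapsto g_H(h_W,h_W)$ is constant on $\mathcal{S}$. A direct evaluation gives $g_H(h_W,h_W)=\tr((iW^\dag H)H^{-2}(-iHW))=\tr(W^\dag W)=\tr(WW^\dag)=1$, constant on all of $\mathcal{S}$, so its gradient vanishes and $\nabla^H_h h=0$; since $\dot\gamma=h_\gamma$ this says precisely that the integral curves are geodesics. As an alternative I would note that, because $g_H$ has constant coefficients, it is a flat metric on the ambient $\Mnn$, so a curve in the hypersurface $\mathcal{S}$ is a geodesic exactly when its ambient acceleration is $g_H$-normal to $\mathcal{S}$; here $\ddot\gamma=(-iH)^2\gamma=-H^2\gamma$, and since the $g_H$-gradient of $\tr(WW^\dag)$ is proportional to $H^2W$, this acceleration is purely normal, confirming the claim.

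None of the three parts is genuinely hard once the flow is written explicitly; the only points requiring care are the commutation of functions of $H$ (which needs $H$ Hermitian and invertible, both assumed) and the tangency/normality bookkeeping on the hypersurface $\mathcal{S}$. I regard the identity $g_H(h,h)\equiv1$ as the conceptual heart of the argument: it is this constancy of the norm of the Hamiltonian field that simultaneously forces the integral curves to be geodesics and expresses, through the Killing identity, the precise compatibility between the von Neumann dynamics and the dynamic metric $g_H$.
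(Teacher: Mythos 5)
Your proposal is correct and follows essentially the same route as the paper: the flow of $h$ is left multiplication by $e^{-iHt}$, which is a $g_H$-isometry because functions of the Hermitian matrix $H$ commute with $H^{-2}$ (giving the Killing property), the projection computation yields the von Neumann equation directly, and the geodesic claim follows from $g_H(h,h)\equiv 1$ combined with the fact that a Killing field of constant length has geodesic integral curves (your gradient identity $\nabla^H_h h=-\tfrac12\operatorname{grad}g_H(h,h)$ is exactly the content of the paper's appendix lemma on skew-adjointness of $\nabla h$). Your alternative argument for part (2), checking that the ambient acceleration $-H^2\gamma$ is $g_H$-normal to the hypersurface $\mathcal{S}$ in the flat constant-coefficient metric on $\Mnn$, is a nice independent verification but does not change the overall assessment.
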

\begin{proof}

Condition (1):
In order to proof that $h$ is a Killing vector field, we only have to show that the flow $\varphi_t:\mathcal{S}\to \mathcal{S}$ given by

\begin{equation}
\begin{cases}
\varphi_0(W)=W, \textnormal{where } W \in S\\
\frac{d}{dt}\varphi_t(W)\vert_{t=0}=h_W\quad,
\end{cases}
\end{equation}
is an isometry, i.e, for any $X,Y \in T_W\mathcal{S}$
\begin{equation}
g_H({\varphi_t}_*(X),{\varphi_t}_*(Y))=g_H(X,Y)\quad.
\end{equation}

Note that 
\begin{equation}
{\varphi_t}_*(X)=e^{-iHt}X\quad,
\end{equation}
and
\begin{equation}\label{isometry}
\begin{aligned}
g_H({\varphi_t}_*(X),{\varphi_t}_*(Y))=& g_H(e^{-iHt}X,e^{-iHt}Y)\\
=&\frac{1}{2}\tr\left((e^{-iHt}X)^\dag H^{-2}e^{-iHt}Y+(e^{-iHt}Y)^\dag H^{-2}e^{-iHt}X\right)\\
=&\frac{1}{2}\tr\left(X^\dag e^{iHt}H^{-2}e^{-iHt}Y+Y^\dag e^{iHt}H^{-2}e^{-iHt}X\right)\\
=&\frac{1}{2}\tr\left(X^\dag H^{-2} Y+Y^\dag H^{-2}X\right)=g_H(X,Y)\quad.
\end{aligned}
\end{equation}

Conditions (2) and (3):
First of all observe that if $\gamma$ is the integral curve of the vector field $h$, i.e, 
\begin{equation}
\dot \gamma=h_{\gamma}=-iH\gamma \quad.
\end{equation}

The projection of $\gamma$ satisfies
\begin{equation}
\begin{aligned}
\frac{d}{dt}\pi(\gamma(t))&=\frac{d}{dt}\left(\gamma(t)\gamma^\dag(t)\right)=\dot\gamma(t)\gamma^\dag(t)+\gamma(t)\dot\gamma^\dag(t)=\dot\gamma(t)\gamma^\dag(t)+\gamma(t)(\dot\gamma(t))^\dag\\
&=-iH\gamma \gamma^\dag(t)+\gamma(t)(-iH\gamma)^\dag=-i[H,\pi(\gamma(t))]\quad.
\end{aligned}
\end{equation}

Hence, the projection of the integral curves of the vector field $h$  satisfies the von Neumann equation. So all we have to prove is that curves are actually geodesic curves
\begin{equation}
\nabla_{h_\gamma}^Hh_\gamma=0\quad.
\end{equation}

Since $h$ is a Killing vector field, we only have to proof that  $h$ is a unitary vector field (due any unitary Killing vector field is a geodesic). Namely, the equality
\begin{equation}
\begin{aligned}
g_H(h_\gamma,h_\gamma)=&g_H(-iH\gamma,-iH\gamma)=\tr\left( (-iH\gamma)^\dag H^{-2}(-iH\gamma)\right)\\
=& \tr\left( \gamma^\dag HH^{-2}H\gamma\right)=\tr\left( \gamma^\dag \gamma\right)=1\quad.
\end{aligned}
\end{equation}

Finally, since $\mathcal{H}$ is a unitary Killing vector field and the integral curves of any Killing vector field of constant length is a geodesic (see appendix theorem \ref{unitary-geodesic}), the integral curves of $\mathcal{H}$ are geodesics.
\end{proof}

\begin{remark}
Let us emphasize that for any hermitian matrix $A=A^\dag$, we can build the vector field $\mathcal{A}$ on $\mathcal{S}$ given by $-iAW$ for any $W\in \mathcal{S}$. It is easy to check as done in equation \ref{isometry} that if $[H,A]=0$, $\mathcal{A}$ is a Killing vector field. Therefore, the set of operators compatible with the Hamiltonian are related to the set of isometries  of $(\mathcal{S},g_H)$, and we can identify any conserved quantum observable with a Killing vector field.
\end{remark}

\section{Geometric approach to quantum Poincare recurrence}
As we know from the main theorem, $h$ is a Killing vector field on the principal manifold $(\mathcal{S},g_H)$ endowed with the dynamic metric $g_H$. Then, the transformations given by the $1-$parametric subgroup $\varphi_t:S\to S$ of integral curves of $h$ are distance-preserving and volume-preserving (see appendix theorem \ref{killing-isometry}). These two facts have the following consequences

\begin{thm}[Insensitivity to Initial Conditions Theorem]
Let $\left\{\mathcal{S}(\Pm,U(n),\pi),h,g_H\right\}$ be a SHg-quantum principal bundle of dimension $n$. Then, for any two points  $ W,V \in  \mathcal{S}$
\begin{equation}
\text{dist} (\varphi_t(W),\varphi_t(V))=\text{dist}(W,V)\quad,
\end{equation}
being the $\varphi_t$ the $1-$parametric subgroup of transformations  given by the integral curves of the Killing field $h$.
\end{thm}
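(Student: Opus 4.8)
The plan is to deduce the statement from the single fact, already established, that $h$ is a Killing vector field of $(\mathcal{S},g_H)$ (condition (1) of the Main theorem). By appendix theorem \ref{killing-isometry}, the flow of a Killing field acts by isometries, so each $\varphi_t$ satisfies $\varphi_t^{\ast}g_H=g_H$; distance preservation is then the general principle that a Riemannian isometry preserves the induced geodesic distance. Thus I would reduce the proof to two routine points: first, checking that $\varphi_t$ is a genuine, globally defined diffeomorphism of $\mathcal{S}$; and second, transporting through this isometry the infimum that defines $\text{dist}$.

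First I would make the flow explicit and verify completeness. The integral curve of $h$ through $W$ solves $\dot\gamma=-iH\gamma$, so $\varphi_t(W)=e^{-iHt}W$, a formula defined for every $t\in\erre$. It stays in $\mathcal{S}$, since the normalization is preserved,
$$
\tr\!\left(e^{-iHt}W(e^{-iHt}W)^\dag\right)=\tr\!\left(e^{-iHt}WW^\dag e^{iHt}\right)=\tr(WW^\dag)=1,
$$
by cyclicity of the trace together with $(e^{-iHt})^\dag=e^{iHt}$, while invertibility is preserved because $e^{-iHt}\in\text{GL}(n,\mathbb{C})$. Consequently $\{\varphi_t\}_{t\in\erre}$ is a one-parameter group of diffeomorphisms of $\mathcal{S}$ with $\varphi_{t+s}=\varphi_t\circ\varphi_s$ and $\varphi_t^{-1}=\varphi_{-t}$, so each $\varphi_t$ maps $\mathcal{S}$ onto itself.

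Next I would transport distances. Recall that
$$
\text{dist}(W,V)=\inf_{\gamma}\int_0^1\sqrt{g_H(\dot\gamma(s),\dot\gamma(s))}\,ds,
$$
the infimum running over piecewise-smooth curves $\gamma$ in $\mathcal{S}$ with $\gamma(0)=W$ and $\gamma(1)=V$. Since $\varphi_t$ is an isometry, the chain rule gives $\frac{d}{ds}(\varphi_t\circ\gamma)={\varphi_t}_{\ast}\dot\gamma$, and the isometry property yields $g_H\big({\varphi_t}_{\ast}\dot\gamma,{\varphi_t}_{\ast}\dot\gamma\big)=(\varphi_t^{\ast}g_H)(\dot\gamma,\dot\gamma)=g_H(\dot\gamma,\dot\gamma)$, so $\varphi_t\circ\gamma$ joins $\varphi_t(W)$ to $\varphi_t(V)$ with the same length as $\gamma$. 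The assignment $\gamma\mapsto\varphi_t\circ\gamma$ is therefore a length-preserving bijection from the curves joining $W,V$ onto the curves joining $\varphi_t(W),\varphi_t(V)$, with inverse $\eta\mapsto\varphi_{-t}\circ\eta$; hence the two infima coincide and $\text{dist}(\varphi_t(W),\varphi_t(V))=\text{dist}(W,V)$.

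I expect no genuine obstacle beyond invoking appendix theorem \ref{killing-isometry}: the fact that the Killing flow is an isometry is exactly the content that does the work, and the remainder is bookkeeping. The only step that deserves an explicit line of care is the completeness of the flow, namely that $\varphi_t$ is defined for all $t$ and is a global diffeomorphism rather than merely a local one, and this is settled at once by the closed form $\varphi_t=e^{-iHt}(\cdot)$ verified above.
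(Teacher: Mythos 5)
Your proposal is correct and follows essentially the same route as the paper, which derives this statement directly from the fact that $h$ is a Killing vector field together with appendix theorem \ref{killing-isometry} (whose distance-preservation part the paper in fact leaves unproven). Your write-up simply fills in the details the paper leaves implicit: the closed form $\varphi_t=e^{-iHt}(\cdot)$, the completeness and surjectivity of the flow (needed to upgrade the local isometries generated by a Killing field to global ones), and the transport of the length-infimum defining $\text{dist}$.
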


\begin{figure}[h]
\centerline{
\includegraphics[width=.45\textwidth{}]{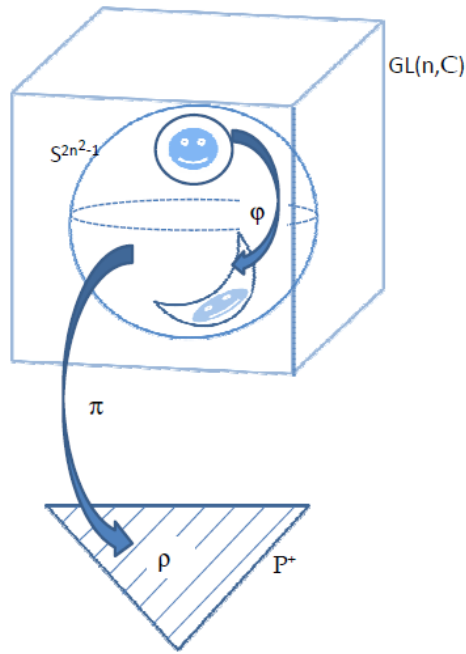}}
\caption{\small{Since the Hamiltonian vector field is a Killing vector field, its flow $\varphi$ preserves the  volume  (Liouville theorem) in the sphere $\mathcal{S}$, where each point can be projected into the space of density matrices $\Pm$.} }
\label{figure1}
\end{figure}

The classical Liouville theorem\cite{Arnold} states  that the natural volume form on a symplectic manifold is invariant under the Hamiltonian flows.  In our case, we have the $1-$parametric subgroup of transformations $\varphi_t: \mathcal{S}\to \mathcal{S}$ given by the integral curves of the Killing vector field $h$ and we can set

\begin{thm}[Liouville Type Theorem]
Let $\left\{\Pm,U(n),\pi),h,g_H\right\}$ be a SHg-quantum principal bundle of dimension $n$. Then for any domain $\Omega\subset \mathcal{S}$
\begin{equation}
\Vol(\varphi_t(\Omega))=\Vol(\Omega)\quad,
\end{equation}
being the $\varphi_t$ the $1-$parametric subgroup of transformations  given by the integral curves of the Killing vector field $h$.
\end{thm}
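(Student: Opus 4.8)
The plan is to reduce the claim to the standard Riemannian fact that a one-parameter group of isometries preserves the canonical volume form, leaning on the Killing property of $h$ already established in the Main theorem. On the oriented Riemannian manifold $(\mathcal{S},g_H)$ (orientable since $\mathcal{S}\cong\mathbb{S}^{2n^2-1}$) there is a canonical volume form $\omega_{g_H}$, locally $\sqrt{\det(g_H)}\,dx^1\wedge\cdots\wedge dx^{2n^2-1}$, and $\Vol(\Omega)=\int_\Omega \omega_{g_H}$. The goal is then to show $\varphi_t^*\omega_{g_H}=\omega_{g_H}$ for every $t$.

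First I would invoke condition (1) of the Main theorem: $h$ is Killing, so each $\varphi_t$ is an isometry with $\varphi_t^* g_H=g_H$, which is precisely the content of the computation \eqref{isometry}. An orientation-preserving isometry pulls the metric-induced volume form back to itself, hence $\varphi_t^*\omega_{g_H}=\omega_{g_H}$, and the change-of-variables formula gives at once $\Vol(\varphi_t(\Omega))=\int_{\varphi_t(\Omega)}\omega_{g_H}=\int_\Omega \varphi_t^*\omega_{g_H}=\int_\Omega \omega_{g_H}=\Vol(\Omega)$.

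Alternatively, and perhaps more transparently for the reader, I would argue infinitesimally through the divergence. By the transport theorem, $\frac{d}{dt}\Vol(\varphi_t(\Omega))=\int_{\varphi_t(\Omega)}(\operatorname{div} h)\,\omega_{g_H}$, where $\operatorname{div} h=\tr(\nabla^H h)$ is characterized by $\mathcal{L}_h\omega_{g_H}=(\operatorname{div} h)\,\omega_{g_H}$. The Killing equation $g_H(\nabla^H_X h,Y)+g_H(X,\nabla^H_Y h)=0$ says precisely that the endomorphism $X\mapsto \nabla^H_X h$ is $g_H$-skew-symmetric on each tangent space, so its trace vanishes and $\operatorname{div} h=0$; integrating $\frac{d}{dt}\Vol(\varphi_t(\Omega))=0$ then closes the argument.

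I do not expect a genuine obstacle here: the whole content already resides in the Killing property of $h$, and what remains is the textbook fact that isometric (equivalently, divergence-free) flows preserve volume. The only points needing mild care are orientability and the regularity of $\partial\Omega$ demanded by the divergence theorem; since $\mathcal{S}$ is compact and orientable these are automatic, and one establishes the identity first for domains with smooth boundary and then extends it to arbitrary measurable $\Omega$ by the usual approximation. I would foreground the isometry route as the main proof and relegate the divergence computation to a remark, as the latter also makes explicit the link to the classical Liouville theorem invoked in the surrounding discussion.
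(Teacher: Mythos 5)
Your proposal is correct, and your ``alternative'' divergence argument is in fact the paper's own proof: the statement is delegated to appendix Theorem~\ref{killing-isometry}, where the authors set $V(t)=\Vol(\varphi_t(\Omega))$, apply the transport theorem to get $V'(0)=\int_\Omega \operatorname{div} h\, d\mu_{g_H}$, and kill the divergence by noting that $\nabla^H h$ is skew-adjoint relative to $g_H$ (O'Neill, Prop.~25) precisely because $h$ is Killing. The route you foreground --- $\varphi_t^*g_H=g_H$ from part (1) of the Main theorem, hence $\varphi_t^*\omega_{g_H}=\omega_{g_H}$, hence $\Vol(\varphi_t(\Omega))=\Vol(\Omega)$ by change of variables --- is genuinely different in presentation: it is global, avoids the transport theorem and any regularity assumption on $\partial\Omega$, applies to arbitrary measurable $\Omega$, and (if one works with the Riemannian density rather than a volume form) does not even need orientability. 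What the paper's divergence route buys instead is the explicit parallel with the classical Liouville theorem, namely that the Hamiltonian flow is divergence-free, which is exactly the analogy the surrounding discussion wants to draw. The underlying content in both cases is the same skew-adjointness of $\nabla^H h$, so your write-up is a sound substitute for the paper's argument.
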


Using the above theorem,  we can therefore state a similar theorem to the Poincare recurrence theorem\cite{Arnold}. 

\begin{thm}[Poincare Type Theorem]
Let $\left\{\mathcal{S}(\Pm,U(n),\pi),h,g_H\right\}$ be a SHg-quantum principal bundle of dimension $n$. For any domain $\Omega \subset \mathcal{S}$ and any time period $T\in \erre^+$ there exist a point $x\in \Omega$ and a positive integer $k>0$ such that 
\begin{equation}
\varphi_{kT}(x)\in \Omega\quad,
\end{equation}
being $\varphi_t: \mathcal{S}\to \mathcal{S}$ the $1-$parametric subgroup of transformations given by the integral curves of the Killing field $h$.
\end{thm}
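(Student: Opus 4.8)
The plan is to reproduce the classical measure-theoretic Poincar\'e recurrence argument, drawing volume invariance from the Liouville Type Theorem just proved and finiteness of the total volume from the compactness of $\mathcal{S}$. First I would fix $T\in\erre^+$ and abbreviate the time-$T$ map by $g:=\varphi_T$. Because the integral curves of $h$ form a $1$-parametric subgroup, $g$ is a diffeomorphism of $\mathcal{S}$ with inverse $\varphi_{-T}$, and by the Liouville Type Theorem every iterate $g^m$ is volume-preserving, so $\Vol(g^m(\Omega))=\Vol(\Omega)$ for all $m\in\mathbb{Z}$.

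Next I would invoke that $\mathcal{S}$ is diffeomorphic to the sphere $\mathbb{S}^{2n^2-1}$ and hence compact, so that its total Riemannian volume $\Vol(\mathcal{S})$ is finite; this is precisely where the finiteness of the energy levels (the hypothesis $\dim(\mathcal{H})=n<\infty$) enters the proof geometrically. Since $\Omega$ is a domain it has positive volume, $\Vol(\Omega)>0$. I would then consider the iterates $\Omega, g(\Omega), g^2(\Omega),\dots$, each of volume $\Vol(\Omega)$. If these sets were pairwise disjoint, then $\sum_{m\ge 0}\Vol(g^m(\Omega))=+\infty$ would exceed the finite quantity $\Vol(\mathcal{S})$, a contradiction; hence there exist integers $p>q\ge 0$ with $g^p(\Omega)\cap g^q(\Omega)\neq\emptyset$.

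Applying the diffeomorphism $g^{-q}$ to this nonempty intersection and setting $k:=p-q>0$ yields $g^{k}(\Omega)\cap\Omega\neq\emptyset$. Picking any point $z$ in this intersection, I write $z=g^{k}(x)=\varphi_{kT}(x)$ for some $x\in\Omega$; since simultaneously $z\in\Omega$, the point $x\in\Omega$ satisfies $\varphi_{kT}(x)\in\Omega$ with $k>0$, which is exactly the assertion of the theorem.

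I do not expect a deep obstacle here. The genuine content---that the Hamiltonian flow is volume-preserving---has already been secured via the fact that $h$ is a Killing vector field, and the finiteness of $\Vol(\mathcal{S})$ is guaranteed by compactness of $\mathbb{S}^{2n^2-1}$. The only delicate point worth stating explicitly is that a \emph{domain} must be taken to have strictly positive volume, since the pigeonhole step fails for null sets; granting this, the argument reduces to the elementary observation that infinitely many pairwise disjoint sets of equal positive volume cannot fit inside a space of finite total volume.
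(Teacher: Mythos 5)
Your proposal is correct and follows essentially the same route as the paper's own proof: iterate the time-$T$ map, use the Liouville Type Theorem to keep all iterates at equal volume, use compactness of $\mathcal{S}\cong\mathbb{S}^{2n^2-1}$ for finiteness of the total volume, force an intersection $\varphi_{lT}(\Omega)\cap\varphi_{mT}(\Omega)\neq\emptyset$, and pull back by $\varphi_{-lT}$ to get $k=m-l$. Your explicit remarks that a domain has strictly positive volume and that the final point $x$ is extracted from $\Omega\cap\varphi_{kT}^{-1}(\Omega)$ are welcome clarifications the paper leaves implicit, but they do not change the argument.
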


\begin{proof}
Consider the following sequence of domains 
$$
\Omega, \varphi_T(\Omega),\varphi_{2T}(\Omega), \cdots, \varphi_{kT}(\Omega), \cdots
$$

All domain in the sequence belongs to the same volume $\Vol(\Omega)$. If the above domains never intersect $\mathcal{S}$, an infinite volume would obtain, but $\mathcal{S}$ is compact, so $\Vol(\mathcal{S})<\infty$. Then, there exist $l\geq 0$ and $m>l$ such that
\begin{equation}
\varphi_{lT}(\Omega)\cap \varphi_{mT}(\Omega)\neq \emptyset\quad,
\end{equation}
so
\begin{equation}
\Omega\cap \varphi_{(m-l)T}(\Omega)\neq \emptyset\quad.
\end{equation}

Setting  $k=m-l$ the theorem is proven.
\end{proof}
Joining the above theorem with the Insensitivity to the Initial Conditions we get

\begin{thm}[Strong Poincare Type Theorem]
Let $\left\{\mathcal{S}(\Pm,U(n),\pi),h,g_H\right\}$ be a SHg-quantum principal bundle of dimension $n$. Then, for any point $W \in \mathcal{S}$, any $\epsilon>0$ and any $T\in \erre^+$, there exist a positive integer $k>0$ such that 
\begin{equation}
\text{dist}(W,\varphi_{kT}(W))< \epsilon\quad,
\end{equation}
being $\varphi_t: \mathcal{S}\to \mathcal{S}$ the $1-$parametric subgroup of transformations given by the integral curves of the Killing vector field $h$.
\end{thm}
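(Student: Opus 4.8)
The plan is to upgrade the setwise recurrence furnished by the Poincare Type Theorem to the pointwise near-return statement about the prescribed point $W$, exploiting the fact that the flow $\varphi_t$ acts by isometries, which is exactly the content of the Insensitivity to Initial Conditions Theorem. The decisive idea is to apply the recurrence theorem not to an arbitrary domain but to a small ball centered precisely at $W$, and then to transport the recurrence of an interior point of that ball back to $W$ itself by the triangle inequality.

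First I would fix $W\in\mathcal{S}$, $\epsilon>0$ and $T\in\erre^+$, and set $\Omega:=B(W,\epsilon/2)$, the open metric ball of radius $\epsilon/2$ about $W$ measured in the Riemannian distance of $(\mathcal{S},g_H)$. Because $\mathcal{S}$ is a manifold and its distance function is nondegenerate, $\Omega$ is a nonempty open domain, so the hypotheses of the Poincare Type Theorem are satisfied. Applying that theorem to $\Omega$ produces a point $x\in\Omega$ and a positive integer $k>0$ with $\varphi_{kT}(x)\in\Omega$. Since $\Omega$ is centered at $W$, membership in $\Omega$ directly yields the two estimates $\text{dist}(W,x)<\epsilon/2$ and $\text{dist}(W,\varphi_{kT}(x))<\epsilon/2$.

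It then remains to exchange $x$ for $W$ inside the orbit. Here I would invoke the isometry identity $\text{dist}(\varphi_{kT}(x),\varphi_{kT}(W))=\text{dist}(x,W)$, which is the Insensitivity to Initial Conditions Theorem read at time $kT$. Combining this with the triangle inequality gives
\begin{align*}
\text{dist}(W,\varphi_{kT}(W)) &\le \text{dist}(W,\varphi_{kT}(x))+\text{dist}(\varphi_{kT}(x),\varphi_{kT}(W))\\
&= \text{dist}(W,\varphi_{kT}(x))+\text{dist}(x,W)\\
&< \frac{\epsilon}{2}+\frac{\epsilon}{2}=\epsilon,
\end{align*}
which is precisely the claimed inequality, with the same $k$ produced above.

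I do not anticipate a genuine obstacle in this argument: essentially all of the analytic weight is already carried by the two cited theorems, and what remains is the classical passage from volume (setwise) recurrence to pointwise almost-recurrence in a compact space acted on by an isometric flow. The only point demanding a little care is the bookkeeping of constants, namely choosing the radius $\epsilon/2$ rather than $\epsilon$ so that the two halves of the triangle inequality sum to the target bound, and centering the domain at $W$ so that the recurrence of $x$ and of $\varphi_{kT}(x)$ translate immediately into distance estimates from $W$.
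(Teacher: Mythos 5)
Your proposal is correct and follows essentially the same argument as the paper: apply the Poincare Type Theorem to the ball $B_{\epsilon/2}(W)$, use the Insensitivity to Initial Conditions Theorem to transfer the recurrence from the interior point to $W$, and conclude with the triangle inequality. The only difference is notational (your $x$ is the paper's $W_0$).
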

\begin{proof}
Let us consider the domain
\begin{equation}
B_{\frac{\epsilon}{2}}(W)=\left\{V\in \mathcal{S}\, :\, \text{dist }(W,V)<\frac{\epsilon}{2}\right\}\quad.
\end{equation}

Applying now the Poincare type theorem there must exist $W_{0} \in B_{\frac{\epsilon}{2}}(W)$ and $k>0$ such that
\begin{equation}
\varphi_{kT}(W_{0})\in B_{\frac{\epsilon}{2}}(W)\quad.
\end{equation} 

So, 
\begin{equation}
\text{dist }(W,\varphi_{kT}(W_{0}))<\frac{\epsilon}{2}\quad.
\end{equation} 

But, by the Insensitivity to Initial Conditions Theorem 
\begin{equation}
\text{dist }(\varphi_{kT}(W),\varphi_{kT}(W_{0}))=\text{dist }(W,W_{0})<\frac{\epsilon}{2}\quad.
\end{equation} 

Therefore, applying the triangular inequality
\begin{equation}
\text{dist }(W,\varphi_{kT}(W))\leq \text{dist }(W,\varphi_{kT}(W_{0}))+ \text{dist }(\varphi_{kT}(W_{0}),\varphi_{kT}(W))<\epsilon\quad.
\end{equation} 
\end{proof}

\subsection{Physical systems with discrete energy eigenvalues}
Using previously stated theorems we can give an alternative proof and more geometric sense of well-known\cite{Bocchieri,Schulman,Percival} principle of recurrence for physical systems with discrete energy eigenvalues.

Thus, defining the length $\Vert A \Vert $ of a matrix $A$ as follows\cite{Percival}
$$
\Vert A\Vert= \sqrt {\tr(A^\dag A)}\quad.
$$ 

Then

\begin{thm}
Let $\rho$ be a mixed state of a quantum system with discrete energy spectrum. Then, $\rho$ is almost periodic. Namely, for an arbitrarily small positive error $\epsilon$ the inequality
\begin{equation}
\Vert \rho(t+T)-\rho(t)\Vert < \epsilon  \text{ for all } t
\end{equation}
is satisfied by infinitely many values of $T$, these values being spread over the whole range $-\infty$ to $\infty$ so as not to leave arbitrarily long empty intervals.
\end{thm}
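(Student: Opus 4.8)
The plan is to solve the dynamics explicitly and then reduce the statement to one about simultaneous Diophantine approximation. By the Main theorem the evolution of the state is $\rho(t)=e^{-iHt}\rho(0)e^{iHt}$, since the geodesic through $\gamma(0)$ is $\gamma(t)=e^{-iHt}\gamma(0)$ and $\rho(t)=\pi(\gamma(t))=\gamma(t)\gamma(t)^\dag$. First I would factor out the common time $t$: a direct computation gives $\rho(t+T)-\rho(t)=e^{-iHt}\big(\rho(T)-\rho(0)\big)e^{iHt}$, and because the length $\Vert A\Vert=\sqrt{\tr(A^\dag A)}$ is invariant under unitary conjugation we obtain $\Vert\rho(t+T)-\rho(t)\Vert=\Vert\rho(T)-\rho(0)\Vert$ for every $t$. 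Hence the quantity to be controlled does not depend on $t$, the clause ``for all $t$'' becomes automatic, and it only remains to exhibit a relatively dense set of values $T$ for which $\Vert\rho(T)-\rho(0)\Vert<\epsilon$.

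Next I would pass to an eigenbasis of the Hamiltonian. Writing the discrete spectrum as $E_1,\dots,E_n$ and setting $\omega_{jk}:=E_j-E_k$, in this basis the matrix elements evolve as $\rho_{jk}(T)=e^{-i\omega_{jk}T}\rho_{jk}(0)$, so that $\Vert\rho(T)-\rho(0)\Vert^2=\sum_{j,k}|\rho_{jk}(0)|^2\,|e^{-i\omega_{jk}T}-1|^2$. Since $\sum_{j,k}|\rho_{jk}(0)|^2=\tr(\rho(0)^2)\le 1$, this norm is smaller than $\epsilon$ as soon as $|e^{-i\omega_{jk}T}-1|<\epsilon$ for all $j,k$; and using $|e^{-i\omega_{jk}T}-1|\le|e^{-iE_jT}-1|+|e^{iE_kT}-1|$ it is in turn enough to make the single unitary $e^{-iHT}$ close to $\mathbb{I}_n$, i.e. to render every phase $e^{-iE_jT}$ simultaneously close to $1$. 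Thus the whole statement reduces to showing that the set of $T$ for which the curve $T\mapsto(e^{-iE_1T},\dots,e^{-iE_nT})$ in the torus $\mathbb{T}^n$ lies within a prescribed neighbourhood of the identity $(1,\dots,1)$ is relatively dense in $\erre$.

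This last reduction is the heart of the matter, and I expect the relative density (rather than mere existence) of the approximating times $T$ to be the main obstacle. The existence of infinitely many such $T$ already follows from the pigeonhole principle, or from the Poincare Type Theorem applied along a fixed arithmetic progression; but these arguments do not by themselves forbid arbitrarily long gaps. To obtain the syndetic conclusion---``spread over the whole range $-\infty$ to $\infty$ so as not to leave arbitrarily long empty intervals''---I would invoke Kronecker's simultaneous approximation theorem, or equivalently observe that the closure of the one-parameter subgroup $\{(e^{-iE_1T},\dots,e^{-iE_nT})\}$ is a subtorus $G\subseteq\mathbb{T}^n$ on which the translation flow is minimal. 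For a minimal flow on a compact group the return times to any neighbourhood of the identity form a relatively dense set, by a standard compactness argument; transporting this set back through the chain of inequalities above yields a relatively dense family of $\epsilon$-almost-periods of $\rho$, which is precisely the assertion of the theorem.
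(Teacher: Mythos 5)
Your proposal is correct, but it takes a genuinely different route from the paper's. The paper deliberately does not solve the dynamics and appeal to Diophantine approximation: it expands $\rho(t)=\sum_{n,n'}\rho^{nn'}(0)e^{i\omega_{nn'}t}$ over a possibly \emph{infinite} discrete spectrum, truncates to a finite block $\sigma^{NN'}(t)$ whose (time-independent) error can be made small, lifts $\sigma^{NN'}$ to the total space $\mathcal{S}$ via the section $\tau(\rho)=\sqrt{\rho}$, and then invokes the Strong Poincare Type Theorem --- i.e.\ the volume-preservation and isometry machinery built from the Killing field $h$ --- to produce return times $kT$. You instead integrate the flow explicitly, observe that $\rho(t+T)-\rho(t)=e^{-iHt}\bigl(\rho(T)-\rho(0)\bigr)e^{iHt}$ so that unitary invariance of $\Vert\cdot\Vert$ disposes of the ``for all $t$'' clause at the outset (the paper handles this only implicitly, through the time-independence of the $|\rho_{nn'}|$), and you reduce the whole statement to Kronecker's theorem, i.e.\ to minimality of the translation flow on the closure of the line $T\mapsto(e^{-iE_1T},\dots,e^{-iE_nT})$ in the torus. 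What each approach buys: the paper's route is the point of the paper --- quantum recurrence realized as literal Poincare recurrence of a volume-preserving geodesic flow --- while yours is essentially the classical Percival/Bocchieri--Loinger argument, more elementary and, notably, more conclusive on the syndeticity clause: the Strong Poincare Type Theorem only guarantees, for each period $T$, \emph{some} multiple $kT$ that returns, which yields infinitely many almost-periods but does not by itself exclude arbitrarily long gaps, whereas your appeal to minimality of an equicontinuous flow on a compact subgroup delivers relative density directly. Two small points to tighten: your estimate $\sum_{j,k}|\rho_{jk}(0)|^2=\tr(\rho(0)^2)\le 1$ and the triangle inequality $|e^{-i\omega_{jk}T}-1|\le|e^{-iE_jT}-1|+|e^{-iE_kT}-1|$ are both fine, but you write the spectrum as a finite list $E_1,\dots,E_n$; to cover the countably infinite discrete spectra the paper's proof is written for, you should prepend exactly the paper's truncation step (the tail norm is time-independent, so fix $N$ with tail below $\epsilon/3$ and run your torus argument on the finite block with tolerance $\epsilon/3$), after which everything goes through.
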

\begin{proof}
Let $\rho(t)$ be the density matrix of a system with a discrete set of stationary states, labeled $n=0,1,2,\cdots,$ with energies $E_n$, some of which may be equal if there are degeneracies. In energy representation the matrix elements are
\begin{equation}
\rho_{nn'}(t)=\langle n\vert \rho(t)\vert n'\rangle\quad.
\end{equation}

Let $T_n=\vert n\rangle\langle n\vert$ be the projection operator onto the $n$th stationary state, then
\begin{equation}
\rho^{nn'}(t)=T_n\rho(t)T_{n'}\quad,
\end{equation}
is the matrix which energy representation has only one nonzero element, equal to $\rho_{nn'}(t)$ and in the location $(n,n')$. These matrices are orthogonal in density space
\begin{equation}
\left(\rho^{nn'}(t),\rho^{n''n'''}(t)\right)=\delta_{nn''}\delta_{n'n'''}\vert\rho_{nn'}(t)\vert^2\quad,
\end{equation}
and
\begin{equation}
\begin{aligned}
\rho(t)&=\sum_{n=0}^{\infty}\sum_{n'=0}^{\infty}\rho^{nn'}(t)\\
&=\sum_{n=0}^{\infty}\sum_{n'=0}^{\infty}\rho^{nn'}(0)e^{i\omega_{nn'}t}\quad,
\end{aligned}
\end{equation}
where $\omega_{nn'}=(E_{n'}-E_n)$.
Now, consider the finite sum
\begin{equation}
\sigma^{NN'}(t)=\sum_{n=0}^N \sum_{n'=0}^{N'}\rho^{nn'}(t)\quad,
\end{equation}
as an approximation to $\rho(t)$. The square of the error is
\begin{equation}
\begin{aligned}
\Vert \rho(t)-\sigma^{NN'}(t)\Vert^2&=\Vert \sum_{n=N+1}^\infty \sum_{n'=N'+1}^{\infty}\rho^{nn'}(t)\Vert^2\\
&=\sum_{n=N+1}^\infty \sum_{n'=N'+1}^{\infty}\Vert \rho^{nn'}(t)\Vert^2\\
&=\sum_{n=N+1}^\infty \sum_{n'=N'+1}^{\infty}\Vert \rho^{nn'}(0)\Vert^2\quad.
\end{aligned}
\end{equation}

The second equality follows from the orthogonality of the $\rho^{nn'}$. Since the error is independent of the time, $\sigma^{NN'}(t)$ converges uniformly to $\rho(t)$ (in the $\Vert\, \Vert$-norm sense). So, $\rho(t)$ can be approximated by $\sigma^{NN'}(t)$. $\sigma^{NN'}(t)$ is a discrete density with finite energy levels, $\sigma^{NN'}(t)\in \Pm$, and the set
\begin{equation}
B_{\epsilon}^{\Pm}(\sigma^{NN'}):=\{ \rho \in \Pm \, : \,  \Vert \rho-\sigma^{NN'}(t)\Vert < \epsilon\}\quad,
\end{equation}
is an open precompact set in $\Pm$. Using the global section given in equation (\ref{global-section}), $\tau(B_\epsilon)$ will be an open precompact set of $\mathcal{S}$. But applying the Strong Poincare Type Theorem for any time period $T>0$ there exists $k>0$ such that
\begin{equation}
\text{dist}\left(\tau(\sigma^{NN'}(t)),\varphi_{kT}(\tau(\sigma^{NN'}(t)))\right)=\text{dist}\left(\tau(\sigma^{NN'}(t)),\tau(\sigma^{NN'}(t+kT))\right)<\varepsilon\quad,
\end{equation} 
for any $\varepsilon >0$. Namely,
\begin{equation}\label{pertany}
\tau(\sigma^{NN'}(t+kT))\in B_{\varepsilon}^{\mathcal{S}}(\tau(\sigma^{NN'}(t)))\quad,
\end{equation}
being $B_{\varepsilon}^{\mathcal{S}}(\tau(\sigma^{NN'}(t)))$ the geodesic ball in $\mathcal{S}$ centered at $\tau(\sigma^{NN'}(t))$ of radius $\varepsilon$. Now choosing $\varepsilon$ small enough 
\begin{equation}
B_{\varepsilon}^{\mathcal{S}}(\tau(\sigma^{NN'}(t)))\subset \tau(B^{\Pm}_\epsilon(\sigma^{NN'}(t)))\quad.
\end{equation}

Therefore by (\ref{pertany})
\begin{equation}
\tau(\sigma^{NN'}(t+kT))\in \tau(B^{\Pm}_\epsilon(\sigma^{NN'}(t)))\quad.
\end{equation}
 
Projecting to the base manifold
\begin{equation}
\sigma^{NN'}(t+kT)\in B^{\Pm}_\epsilon(\sigma^{NN'}(t))\quad.
\end{equation}

By definition of $B^{\Pm}_\epsilon(\sigma^{NN'}(t))$
\begin{equation}
\Vert \sigma^{NN'}(t+kT)-\sigma^{NN'}(t)\Vert < \epsilon\quad.
\end{equation}

And the theorem is proven.
\end{proof}
\section{Appendix}
In this section we recall several well known results about Killing vector fields on Riemannian manifolds(for a more detailed approximation see O'Neill\cite{Oneill}). 

\begin{thm}\label{unitary-geodesic}(see also \cite{Beres})Let $(M,g)$ be Riemannian manifold, then  any integral curve $\gamma: I\subset \erre\to M$ of a Killing vector field $X$ of constant length $\sqrt{g(X,X)}$ is a geodesic on $M$. 
\end{thm}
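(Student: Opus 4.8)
The plan is to reduce the geodesic condition to a short algebraic consequence of the Killing equation. Since $\gamma$ is an integral curve of $X$, we have $\dot\gamma = X_\gamma$, so along $\gamma$ the covariant acceleration is $\nabla_{\dot\gamma}\dot\gamma = \nabla_X X$. Hence it suffices to show that $\nabla_X X = 0$ (along $\gamma$, or in fact everywhere). First I would recall that $X$ being a Killing field is equivalent to the skew-symmetry of the endomorphism $Y \mapsto \nabla_Y X$, that is,
\begin{equation}
g(\nabla_Y X, Z) + g(\nabla_Z X, Y) = 0
\end{equation}
for all vector fields $Y, Z$.

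Next I would differentiate the squared length of $X$. Using that the Levi-Civita connection is metric compatible, for any vector field $Y$ one has
\begin{equation}
Y\bigl(g(X,X)\bigr) = 2\, g(\nabla_Y X, X).
\end{equation}
Then I would specialize the Killing identity to $Z = X$, which gives $g(\nabla_Y X, X) = -\,g(\nabla_X X, Y)$. Substituting this into the previous display yields
\begin{equation}
Y\bigl(g(X,X)\bigr) = -2\, g(\nabla_X X, Y).
\end{equation}

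Finally, the hypothesis that $X$ has constant length means that $g(X,X)$ is a constant function, so its derivative $Y\bigl(g(X,X)\bigr)$ vanishes for every $Y$. Therefore $g(\nabla_X X, Y) = 0$ for all $Y$, and nondegeneracy of $g$ forces $\nabla_X X = 0$. Restricting to the curve gives $\nabla_{\dot\gamma}\dot\gamma = 0$, so $\gamma$ is a geodesic. There is no real obstacle in this argument; the only step requiring care is the correct application of the Killing equation (the sign and the choice $Z = X$), which is exactly what trades the derivative of $X$ in the direction $Y$ for the derivative in the direction $X$, thereby converting the \emph{constant length} hypothesis into the \emph{geodesic} conclusion.
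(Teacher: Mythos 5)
Your argument is correct and is essentially the same as the paper's: both use the skew-adjointness of $Y \mapsto \nabla_Y X$ coming from the Killing equation, trade $g(\nabla_Y X, X)$ for $\tfrac{1}{2}\,Y\bigl(g(X,X)\bigr)$ via metric compatibility, and invoke the constant-length hypothesis plus nondegeneracy of $g$ to conclude $\nabla_X X = 0$. No differences worth noting.
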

\begin{proof}
Here, we need
\begin{equation}
\nabla_{\dot \gamma}\dot \gamma=\nabla_XX=0\quad,
\end{equation}
but, since $X$ is a Killing vector field, the Lie derivative of the metric is zero $L_Xg=0$ and (see O'Neill\cite{Oneill}, proposition 25) $\nabla X$ is skew-adjoint relative to $g$, then
\begin{equation}
g(\nabla_XX,W)+g(\nabla_WX,X)=0\quad,
\end{equation}
for any $X\in T\mathcal{S}$. Therefore
\begin{equation}
0=g(\nabla_XX,W)+1/2 W(g(X,X))=g(\nabla_XX,W)\quad,
\end{equation}
then $\nabla_XX=0$.
\end{proof}

\begin{thm}\label{killing-isometry}
Let $(M,g)$ be a Riemannian manifold, let $X$ be a Killing vector field on $M$, and denote by $\varphi_t: M\to M$ the $1-$parametric subgroup of transformations given by $X$ (i.e, $\varphi_0(p)=p$, $\frac{d}{dt}\varphi_t(p)\vert_{t=0}=X_p$), then
\begin{enumerate}
\item Given any two points $p,q \in M$, $\text{dist}(p,q)=\text{dist}(\varphi_t(p),\varphi_t(q))$.
\item Given any domain $\Omega\subset M$, $\Vol(\varphi_t(\Omega))=\Vol(\Omega)$.
\end{enumerate}
\end{thm}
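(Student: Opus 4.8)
The plan is to reduce both claims to the single fact that each $\varphi_t$ is a Riemannian isometry of $(M,g)$, which is the analytic content of $X$ being a Killing vector field. First I would establish that $\varphi_t^* g = g$ for every $t$ in the domain of the flow. The Killing condition is precisely $L_X g = 0$, and by the naturality of the Lie derivative under its own flow one has
\[
\frac{d}{dt}\,\varphi_t^* g = \varphi_t^*(L_X g) = 0\quad,
\]
so that $\varphi_t^* g$ is independent of $t$; evaluating at $t=0$ gives $\varphi_t^* g = \varphi_0^* g = g$. Hence each $\varphi_t$ is a diffeomorphism preserving the metric tensor, i.e. an isometry. Equivalently, this may be read off from the skew-adjointness of $\nabla X$ already invoked in the proof of Theorem \ref{unitary-geodesic}.

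For claim (1) I would use that an isometry preserves the length functional on curves: if $c:[0,1]\to M$ joins $p$ to $q$, then $\varphi_t \circ c$ joins $\varphi_t(p)$ to $\varphi_t(q)$ and has the same length, because the invariance $\varphi_t^* g = g$ keeps the integrand $\sqrt{g(\dot c,\dot c)}$ unchanged. Since $\varphi_t$ is a diffeomorphism, $c \mapsto \varphi_t \circ c$ is a length-preserving bijection between the admissible curves joining $p,q$ and those joining $\varphi_t(p),\varphi_t(q)$; taking the infimum of length over each family then yields $\text{dist}(\varphi_t(p),\varphi_t(q))=\text{dist}(p,q)$.

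For claim (2) I would note that the isometry $\varphi_t$ pulls back the Riemannian volume form to itself, $\varphi_t^*(\text{vol}_g)=\text{vol}_g$, since in local coordinates $\text{vol}_g=\sqrt{\det(g_{ij})}\,dx^1\wedge\cdots\wedge dx^n$ depends only on the metric, which is preserved. The change of variables formula then gives
\[
\Vol(\varphi_t(\Omega))=\int_{\varphi_t(\Omega)}\text{vol}_g=\int_{\Omega}\varphi_t^*(\text{vol}_g)=\int_{\Omega}\text{vol}_g=\Vol(\Omega)\quad.
\]
Alternatively, skew-adjointness of $\nabla X$ forces $\text{div}(X)=\text{tr}(\nabla X)=0$, and the Liouville formula $\frac{d}{dt}\Vol(\varphi_t(\Omega))=\int_{\varphi_t(\Omega)}\text{div}(X)\,\text{vol}_g$ gives the same conclusion.

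All the steps are standard; the only point that genuinely requires care is upgrading the infinitesimal Killing identity $L_X g=0$ to the global statement $\varphi_t^* g=g$ for all $t$, which is exactly where the flow identity $\frac{d}{dt}\varphi_t^* g=\varphi_t^* L_X g$ does the real work. Once an honest isometry is in hand, both the distance-preserving and the volume-preserving conclusions follow formally.
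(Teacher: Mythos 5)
Your proposal is correct, and it actually proves more than the paper's own argument does. The paper's proof addresses only claim (2): it computes $V'(0)=\int_\Omega \operatorname{div} X \, d\mu_g$, observes that the Killing condition makes $\nabla X$ skew-adjoint so that $\operatorname{div} X=\operatorname{tr}(\nabla X)=0$, and concludes volume preservation; claim (1) is left without an explicit argument (the proof ends with ``the theorem follows''). You instead first upgrade the infinitesimal condition $L_Xg=0$ to the global statement $\varphi_t^*g=g$ via the flow identity $\frac{d}{dt}\varphi_t^*g=\varphi_t^*(L_Xg)$, and then derive both claims from the fact that each $\varphi_t$ is an honest isometry: distance preservation from invariance of the length functional, and volume preservation from $\varphi_t^*(\mathrm{vol}_g)=\mathrm{vol}_g$ plus change of variables. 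This buys you a unified treatment in which claim (1) is genuinely proved rather than implicit, at the cost of invoking the Lie-derivative/flow naturality identity; the paper's divergence route (which you correctly note as an alternative for claim (2)) is more elementary but, as written, covers only half the statement. One small caveat worth flagging in either approach: the statement tacitly assumes the flow is complete so that each $\varphi_t$ is a globally defined diffeomorphism; this holds in the paper's application since $\mathcal{S}$ is compact, and it is what makes your bijection between curve families (and hence the infimum argument for the distance) legitimate.
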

\begin{proof}
Let $\varphi_t(\Omega)$ be the flow of the domain $\Omega$. Allow us denote 
\begin{equation}
V(t):=Vol(\varphi_t(\Omega))\quad.
\end{equation}

Then, the divergence is just (see Chavel\cite{Chavel2}) 
\begin{equation}
V'(0)=\int_\Omega \text{div }\mathcal{H} \, d\mu_{g_H}\quad ,
\end{equation}
where $d\mu_{g_H}$ denotes the Riemannian density measure.

The divergence is defined as\cite{DoCarmo2}
\begin{equation}
\text{div }\mathcal{H}=\tr(Y\to \nabla^H_YX)\quad.
\end{equation}

Given an orthonormal  base $\{E_i\}_{i=1}^{2n^2-1}$ in $T_W\mathcal{S}$
\begin{equation}
\nabla^H_Y\mathcal{H}=\sum_i Y^i\nabla^H_{E_i}\mathcal{H}=\sum_{i,j} Y^ig_H(\nabla^H_{E_i}\mathcal{H},E_j)E_j\quad ,
\end{equation}
where $Y^i:=g_H(Y,E_i)$. Therefore
\begin{equation}
\text{div }\mathcal{H}=\sum_ig_H(\nabla^H_{E_i}\mathcal{H},E_j)\quad.
\end{equation}

But since $\mathcal{H}$ is a Killing vector field $\nabla^H \mathcal{H}$  is skew-adjoint relative to $g_H$ (see O'Neill\cite{Oneill}, proposition 25 again), then 
\begin{equation}
\text{div }\mathcal{H}=0\quad .
\end{equation}

And the theorem follows.
\end{proof}


\begin{thebibliography}{29}%
\makeatletter
\providecommand \@ifxundefined [1]{%
 \@ifx{#1\undefined}
}%
\providecommand \@ifnum [1]{%
 \ifnum #1\expandafter \@firstoftwo
 \else \expandafter \@secondoftwo
 \fi
}%
\providecommand \@ifx [1]{%
 \ifx #1\expandafter \@firstoftwo
 \else \expandafter \@secondoftwo
 \fi
}%
\providecommand \natexlab [1]{#1}%
\providecommand \enquote  [1]{``#1''}%
\providecommand \bibnamefont  [1]{#1}%
\providecommand \bibfnamefont [1]{#1}%
\providecommand \citenamefont [1]{#1}%
\providecommand \href@noop [0]{\@secondoftwo}%
\providecommand \href [0]{\begingroup \@sanitize@url \@href}%
\providecommand \@href[1]{\@@startlink{#1}\@@href}%
\providecommand \@@href[1]{\endgroup#1\@@endlink}%
\providecommand \@sanitize@url [0]{\catcode `\\12\catcode `\$12\catcode
  `\&12\catcode `\#12\catcode `\^12\catcode `\_12\catcode `\%12\relax}%
\providecommand \@@startlink[1]{}%
\providecommand \@@endlink[0]{}%
\providecommand \url  [0]{\begingroup\@sanitize@url \@url }%
\providecommand \@url [1]{\endgroup\@href {#1}{\urlprefix }}%
\providecommand \urlprefix  [0]{URL }%
\providecommand \Eprint [0]{\href }%
\providecommand \doibase [0]{http://dx.doi.org/}%
\providecommand \selectlanguage [0]{\@gobble}%
\providecommand \bibinfo  [0]{\@secondoftwo}%
\providecommand \bibfield  [0]{\@secondoftwo}%
\providecommand \translation [1]{[#1]}%
\providecommand \BibitemOpen [0]{}%
\providecommand \bibitemStop [0]{}%
\providecommand \bibitemNoStop [0]{.\EOS\space}%
\providecommand \EOS [0]{\spacefactor3000\relax}%
\providecommand \BibitemShut  [1]{\csname bibitem#1\endcsname}%
\let\auto@bib@innerbib\@empty
\bibitem [{\citenamefont {Arnol{\cprime}d}(199?)}]{Arnold}%
  \BibitemOpen
  \bibfield  {author} {\bibinfo {author} {\bibfnamefont {V.~I.}\ \bibnamefont
  {Arnol{\cprime}d}},\ }\href@noop {} {\emph {\bibinfo {title} {Mathematical
  methods of classical mechanics}}},\ \bibinfo {series} {Graduate Texts in
  Mathematics}, Vol.~\bibinfo {volume} {60}\ (\bibinfo  {publisher}
  {Springer-Verlag},\ \bibinfo {address} {New York},\ \bibinfo {year} {1989})\
  pp.\ \bibinfo {pages} {xvi+516}\ \BibitemShut {NoStop}%
\bibitem [{\citenamefont {Bartels}\ and\ \citenamefont
  {Stewart}(1972)}]{Bartels}%
  \BibitemOpen
  \bibfield  {author} {\bibinfo {author} {\bibfnamefont {R.~H.}\ \bibnamefont
  {Bartels}}\ and\ \bibinfo {author} {\bibfnamefont {G.~W.}\ \bibnamefont
  {Stewart}},\ }\bibfield  {title} {\enquote {\bibinfo {title} {Solution of the
  matrix equation {$AX+XB=C$}},}\ }\href {\doibase 11.1145/361573.361582}
  {\bibfield  {journal} {\bibinfo  {journal} {Comm. ACM}\ }\textbf {\bibinfo
  {volume} {15}},\ \bibinfo {pages} {820--826} (\bibinfo {year}
  {1972})}\BibitemShut {NoStop}%
\bibitem [{\citenamefont {Bengtsson}\ and\ \citenamefont
  {{\.Z}yczkowski}(2006)}]{Bengtsson}%
  \BibitemOpen
  \bibfield  {author} {\bibinfo {author} {\bibfnamefont {Ingemar}\ \bibnamefont
  {Bengtsson}}\ and\ \bibinfo {author} {\bibfnamefont {Karol}\ \bibnamefont
  {{\.Z}yczkowski}},\ }\href {\doibase 10.1017/CBO9780511535048} {\emph
  {\bibinfo {title} {Geometry of quantum states. An
  introduction to quantum entanglement,}}}\ (\bibinfo  {publisher}
  {Cambridge University Press},\ \bibinfo {address} {Cambridge},\ \bibinfo
  {year} {2006})\ pp.\ \bibinfo {pages} {xii+466}\ \BibitemShut {NoStop}%
\bibitem [{\citenamefont {Berestovski{\u\i}}\ and\ \citenamefont
  {Nikonorov}(2008)}]{Beres}%
  \BibitemOpen
  \bibfield  {author} {\bibinfo {author} {\bibfnamefont {V.~N.}\ \bibnamefont
  {Berestovski{\u\i}}}\ and\ \bibinfo {author} {\bibfnamefont {Yu.~G.}\
  \bibnamefont {Nikonorov}},\ }\bibfield  {title} {\enquote {\bibinfo {title}
  {Killing vector fields of constant length on {R}iemannian manifolds},}\
  }\href {\doibase 10.1007/s11202-008-0039-3} {\bibfield  {journal} {\bibinfo
  {journal} {Sibirsk. Mat. Zh.}\ }\textbf {\bibinfo {volume} {49}},\ \bibinfo
  {pages} {497--514} (\bibinfo {year} {2008})}\BibitemShut {NoStop}%
\bibitem [{\citenamefont {Bocchieri}\ and\ \citenamefont
  {Loinger}(1957)}]{Bocchieri}%
  \BibitemOpen
  \bibfield  {author} {\bibinfo {author} {\bibfnamefont {P.}~\bibnamefont
  {Bocchieri}}\ and\ \bibinfo {author} {\bibfnamefont {A.}~\bibnamefont
  {Loinger}},\ }\bibfield  {title} {\enquote {\bibinfo {title} {Quantum
  recurrence theorem},}\ }\href@noop {} {\bibfield  {journal} {\bibinfo
  {journal} {Phys. Rev. (2)}\ }\textbf {\bibinfo {volume} {107}},\ \bibinfo
  {pages} {337--338} (\bibinfo {year} {1957})}\BibitemShut {NoStop}%
\bibitem [{\citenamefont {do~Carmo}(1992)}]{DoCarmo2}%
  \BibitemOpen
  \bibfield  {author} {\bibinfo {author} {\bibfnamefont
  {Manfredo~Perdig{\~a}o}\ \bibnamefont {do~Carmo}},\ }\href@noop {} {\emph
  {\bibinfo {title} {Riemannian geometry}}},\ Mathematics: Theory \&
  Applications\ (\bibinfo  {publisher} {Birkh\"auser Boston Inc.},\ \bibinfo
  {address} {Boston, MA},\ \bibinfo {year} {1992})\ pp.\ \bibinfo {pages}
  {xiv+300}\ \BibitemShut {NoStop}%
\bibitem [{\citenamefont {Chavel}(1984)}]{Chavel}%
  \BibitemOpen
  \bibfield  {author} {\bibinfo {author} {\bibfnamefont {Isaac}\ \bibnamefont
  {Chavel}},\ }\href@noop {} {\emph {\bibinfo {title} {Eigenvalues in
  {R}iemannian geometry}}},\ \bibinfo {series} {Pure and Applied Mathematics},
  Vol.\ \bibinfo {volume} {115}\ (\bibinfo  {publisher} {Academic Press Inc.},\
  \bibinfo {address} {Orlando, FL},\ \bibinfo {year} {1984})\ pp.\ \bibinfo
  {pages} {xiv+362}\ \BibitemShut {NoStop}%
\bibitem [{\citenamefont {Chavel}(1993)}]{Chavel2}%
  \BibitemOpen
  \bibfield  {author} {\bibinfo {author} {\bibfnamefont {Isaac}\ \bibnamefont
  {Chavel}},\ }\href@noop {} {\emph {\bibinfo {title} {Riemannian geometry---a
  modern introduction}}},\ \bibinfo {series} {Cambridge Tracts in Mathematics},
  Vol.\ \bibinfo {volume} {108}\ (\bibinfo  {publisher} {Cambridge University
  Press},\ \bibinfo {address} {Cambridge},\ \bibinfo {year} {1993})\ pp.\
  \bibinfo {pages} {xii+386}\BibitemShut {NoStop}%
\bibitem [{\citenamefont {Chru{\'s}ci{\'n}ski}\ and\ \citenamefont
  {Jamio{\l}kowski}(2004)}]{Chruscinski}%
  \BibitemOpen
  \bibfield  {author} {\bibinfo {author} {\bibfnamefont {Dariusz}\ \bibnamefont
  {Chru{\'s}ci{\'n}ski}}\ and\ \bibinfo {author} {\bibfnamefont {Andrzej}\
  \bibnamefont {Jamio{\l}kowski}},\ }\href {\doibase 10.1007/978-0-8176-8176-0}
  {\emph {\bibinfo {title} {Geometric phases in classical and quantum
  mechanics}}},\ \bibinfo {series} {Progress in Mathematical Physics},
  Vol.~\bibinfo {volume} {36}\ (\bibinfo  {publisher} {Birkh\"auser Boston
  Inc.},\ \bibinfo {address} {Boston, MA},\ \bibinfo {year} {2004})\ pp.\
  \bibinfo {pages} {xiv+333}\BibitemShut {NoStop}%
\bibitem [{\citenamefont {Dittmann}(1999)}]{Dittmann}%
  \BibitemOpen
  \bibfield  {author} {\bibinfo {author} {\bibfnamefont {J.}~\bibnamefont
  {Dittmann}},\ }\bibfield  {title} {\enquote {\bibinfo {title} {Explicit
  formulae for the {B}ures metric},}\ }\href {\doibase
  10.1088/0305-4470/32/14/007} {\bibfield  {journal} {\bibinfo  {journal} {J.
  Phys. A}\ }\textbf {\bibinfo {volume} {32}},\ \bibinfo {pages} {2663--2670}
  (\bibinfo {year} {1999})}\BibitemShut {NoStop}%
\bibitem [{\citenamefont {Jost}(2009)}]{Jost-phy}%
  \BibitemOpen
  \bibfield  {author} {\bibinfo {author} {\bibfnamefont {J{\"u}rgen}\
  \bibnamefont {Jost}},\ }\href {\doibase 10.1007/978-3-642-00541-1} {\emph
  {\bibinfo {title} {Geometry and physics}}}\ (\bibinfo  {publisher}
  {Springer-Verlag},\ \bibinfo {address} {Berlin},\ \bibinfo {year} {2009})\
  pp.\ \bibinfo {pages} {xiv+217}\BibitemShut {NoStop}%
\bibitem [{\citenamefont {Kibble}(1979)}]{Kibble}%
  \BibitemOpen
  \bibfield  {author} {\bibinfo {author} {\bibfnamefont {T.W.B.}\ \bibnamefont
  {Kibble}},\ }\bibfield  {title} {{{\bibinfo  {title} {Geometrization of quantum mechanics},}\ }}\href {\doibase
  10.1007/BF01225149} {\bibfield  {journal} {\bibinfo  {journal}
  {Communications in Mathematical Physics}\ }\textbf {\bibinfo {volume} {65}},\
  \bibinfo {pages} {189--201} (\bibinfo {year} {1979})}\BibitemShut {NoStop}%
\bibitem [{\citenamefont {Kobayashi}\ and\ \citenamefont
  {Nomizu}(1996)}]{Koba1}%
  \BibitemOpen
  \bibfield  {author} {\bibinfo {author} {\bibfnamefont {Shoshichi}\
  \bibnamefont {Kobayashi}}\ and\ \bibinfo {author} {\bibfnamefont {Katsumi}\
  \bibnamefont {Nomizu}},\ }\href@noop {} {\emph {\bibinfo {title} {Foundations
  of differential geometry. {V}ol. {I}}}},\ Wiley Classics Library\ (\bibinfo
  {publisher} {John Wiley \& Sons Inc.},\ \bibinfo {address} {New York},\
  \bibinfo {year} {1996})\ pp.\ \bibinfo {pages} {xii+329}\ \BibitemShut
  {NoStop}%
\bibitem [{\citenamefont {Kryukov}(2005)}]{Kryukov1}%
  \BibitemOpen
  \bibfield  {author} {\bibinfo {author} {\bibfnamefont {Alexey~A.}\
  \bibnamefont {Kryukov}},\ }\bibfield  {title} {\enquote {\bibinfo {title}
  {Linear algebra and differential geometry on abstract {H}ilbert space},}\
  }\href {\doibase 10.1155/IJMMS.2005.2241} {\bibfield  {journal} {\bibinfo
  {journal} {Int. J. Math. Math. Sci.}\ ,\ \bibinfo {pages} {2241--2275}}
  (\bibinfo {year} {2005})}\BibitemShut {NoStop}%
\bibitem [{\citenamefont {Kryukov}(2006)}]{Kryukov}%
  \BibitemOpen
  \bibfield  {author} {\bibinfo {author} {\bibfnamefont {Alexey~A.}\
  \bibnamefont {Kryukov}},\ }\bibfield  {title} {\enquote {\bibinfo {title}
  {Quantum mechanics on {H}ilbert manifolds: the principle of functional
  relativity},}\ }\href {\doibase 10.1007/s10701-005-9012-1} {\bibfield
  {journal} {\bibinfo  {journal} {Found. Phys.}\ }\textbf {\bibinfo {volume}
  {36}},\ \bibinfo {pages} {175--226} (\bibinfo {year} {2006})}\BibitemShut
  {NoStop}%
\bibitem [{\citenamefont {Kryukov}(2007)}]{kryukov2}%
  \BibitemOpen
  \bibfield  {author} {\bibinfo {author} {\bibfnamefont {Alexey~A.}\
  \bibnamefont {Kryukov}},\ }\bibfield  {title} {\enquote {\bibinfo {title} {On
  the measurement problem for a two-level quantum system},}\ }\href {\doibase
  10.1007/s10701-006-9093-5} {\bibfield  {journal} {\bibinfo  {journal} {Found.
  Phys.}\ }\textbf {\bibinfo {volume} {37}},\ \bibinfo {pages} {3--39}
  (\bibinfo {year} {2007})}\BibitemShut {NoStop}%
\bibitem [{\citenamefont {Lee}(2003)}]{Lee-man}%
  \BibitemOpen
  \bibfield  {author} {\bibinfo {author} {\bibfnamefont {John~M.}\ \bibnamefont
  {Lee}},\ }\href@noop {} {\emph {\bibinfo {title} {Introduction to smooth
  manifolds}}},\ \bibinfo {series} {Graduate Texts in Mathematics}, Vol.\
  \bibinfo {volume} {218}\ (\bibinfo  {publisher} {Springer-Verlag},\ \bibinfo
  {address} {New York},\ \bibinfo {year} {2003})\ pp.\ \bibinfo {pages}
  {xviii+628}\BibitemShut {NoStop}%
\bibitem [{\citenamefont {Nielsen}\ and\ \citenamefont
  {Chuang}(2000)}]{Nielsen}%
  \BibitemOpen
  \bibfield  {author} {\bibinfo {author} {\bibfnamefont {Michael~A.}\
  \bibnamefont {Nielsen}}\ and\ \bibinfo {author} {\bibfnamefont {Isaac~L.}\
  \bibnamefont {Chuang}},\ }\href@noop {} {\emph {\bibinfo {title} {Quantum
  computation and quantum information}}}\ (\bibinfo  {publisher} {Cambridge
  University Press},\ \bibinfo {address} {Cambridge},\ \bibinfo {year} {2000})\
  pp.\ \bibinfo {pages} {xxvi+676}\BibitemShut {NoStop}%
\bibitem [{Note1()}]{Note1}%
  \BibitemOpen
  \bibinfo {note} {Observe that in equation \ref {Schroed} and throughout this
  paper we use the natural units system ($h=1$)}\BibitemShut {NoStop}%
\bibitem [{Note2()}]{Note2}%
  \BibitemOpen
  \bibinfo {note} {The map $\tau $ is well defined because a positive operator
  admits a unique positive square root. It is a section because $\pi (\tau
  (\rho ))=(\protect \sqrt (\rho ))^2=\rho $.}\BibitemShut {Stop}%
\bibitem [{\citenamefont {O'Neill}(1983)}]{Oneill}%
  \BibitemOpen
  \bibfield  {author} {\bibinfo {author} {\bibfnamefont {Barrett}\ \bibnamefont
  {O'Neill}},\ }\href@noop {} {\emph {\bibinfo {title} {Semi-{R}iemannian
  geometry. With applications to relativity }}},\ \bibinfo {series} {Pure and Applied Mathematics}, Vol.\
  \bibinfo {volume} {103}\ (\bibinfo  {publisher} {Academic Press Inc.
  [Harcourt Brace Jovanovich Publishers]},\ \bibinfo {address} {New York},\
  \bibinfo {year} {1983})\ pp.\ \bibinfo {pages} {xiii+468}\ \BibitemShut {NoStop}%
\bibitem [{\citenamefont {Percival}(1961)}]{Percival}%
  \BibitemOpen
  \bibfield  {author} {\bibinfo {author} {\bibfnamefont {Ian~C.}\ \bibnamefont
  {Percival}},\ }\bibfield  {title} {\enquote {\bibinfo {title} {Almost
  periodicity and the quantal {$H$} theorem},}\ }\href@noop {} {\bibfield
  {journal} {\bibinfo  {journal} {J. Mathematical Phys.}\ }\textbf {\bibinfo
  {volume} {2}},\ \bibinfo {pages} {235--239} (\bibinfo {year}
  {1961})}\BibitemShut {NoStop}%
\bibitem [{\citenamefont {Sakurai}(1994)}]{Sakurai}%
  \BibitemOpen
  \bibfield  {author} {\bibinfo {author} {\bibfnamefont {Jun~John}\
  \bibnamefont {Sakurai}},\ }\href@noop {} {\emph {\bibinfo {title} {Modern
  Quantum Mechanics}}}\ (\bibinfo  {publisher} {Addison-Wesley},\
  \bibinfo {address} {Reading, MA},\ \bibinfo {year} {1994})\BibitemShut
  {NoStop}%
\bibitem [{\citenamefont {Schulman}(1978)}]{Schulman}%
  \BibitemOpen
  \bibfield  {author} {\bibinfo {author} {\bibfnamefont {L.~S.}\ \bibnamefont
  {Schulman}},\ }\bibfield  {title} {\enquote {\bibinfo {title} {Note on the
  quantum recurrence theorem},}\ }\href {\doibase 10.1103/PhysRevA.18.2379}
  {\bibfield  {journal} {\bibinfo  {journal} {Phys. Rev. A}\ }\textbf {\bibinfo
  {volume} {18}},\ \bibinfo {pages} {2379--2380} (\bibinfo {year}
  {1978})}\BibitemShut {NoStop}%
\bibitem [{\citenamefont {Sylvester}(1884)}]{Sylvester}%
  \BibitemOpen
  \bibfield  {author} {\bibinfo {author} {\bibfnamefont {J.}~\bibnamefont
  {Sylvester}},\ }\bibfield  {title} {\enquote {\bibinfo {title} {Sur
  l{\cprime}equation en matrices px = xq},}\ }\href@noop {} {\bibfield
  {journal} {\bibinfo  {journal} {C.R. Acad. Sci. Paris}\ }\textbf {\bibinfo
  {volume} {99}},\ \bibinfo {pages} {67--71} (\bibinfo {year}
  {1884})}\BibitemShut {NoStop}%
\bibitem [{\citenamefont {Uhlmann}(1987)}]{Uhlmann87}%
  \BibitemOpen
  \bibfield  {author} {\bibinfo {author} {\bibfnamefont {A}~\bibnamefont
  {Uhlmann}},\ }\href@noop {} {\enquote {\bibinfo {title} {Parallel transport
  and holonomy along density operators},}\ }\bibinfo {type} {Tech. Rep.}\
  (\bibinfo  {institution} {Leipzig Univ.},\ \bibinfo {address} {Leipzig},\
  \bibinfo {year} {1987})\BibitemShut {NoStop}%
\bibitem [{\citenamefont {Uhlmann}(1989)}]{Uhlmann89}%
  \BibitemOpen
  \bibfield  {author} {\bibinfo {author} {\bibfnamefont {A.}~\bibnamefont
  {Uhlmann}},\ }\bibfield  {title} {\enquote {\bibinfo {title} {On {B}erry
  phases along mixtures of states},}\ }\href {\doibase
  10.1002/andp.19895010108} {\bibfield  {journal} {\bibinfo  {journal} {Ann.
  Physik (7)}\ }\textbf {\bibinfo {volume} {46}},\ \bibinfo {pages} {63--69}
  (\bibinfo {year} {1989})}\BibitemShut {NoStop}%
\bibitem [{\citenamefont {Uhlmann}(1986)}]{Uhlmann86}%
  \BibitemOpen
  \bibfield  {author} {\bibinfo {author} {\bibfnamefont {Armin}\ \bibnamefont
  {Uhlmann}},\ }\bibfield  {title} {\enquote {\bibinfo {title} {Parallel
  transport and “quantum holonomy” along density operators},}\ }\href
  {\doibase 10.1016/0034-4877(86)90055-8} {\bibfield  {journal} {\bibinfo
  {journal} {Reports on Mathematical Physics}\ }\textbf {\bibinfo {volume}
  {24}},\ \bibinfo {pages} {229 -- 240} (\bibinfo {year} {1986})}\BibitemShut
  {NoStop}%
\bibitem [{\citenamefont {Uhlmann}(1991)}]{Uhlmann91}%
  \BibitemOpen
  \bibfield  {author} {\bibinfo {author} {\bibfnamefont {Armin}\ \bibnamefont
  {Uhlmann}},\ }\bibfield  {title} {\enquote {\bibinfo {title} {A gauge field
  governing parallel transport along mixed states},}\ }\href {\doibase
  10.1007/BF00420373} {\bibfield  {journal} {\bibinfo  {journal} {Lett. Math.
  Phys.}\ }\textbf {\bibinfo {volume} {21}},\ \bibinfo {pages} {229--236}
  (\bibinfo {year} {1991})}\BibitemShut {NoStop}%
\end{thebibliography}
\def\cprime{$'$} 

\end{document}